%% file: mainv2.tex
\documentclass[prl,twocolumn,amsmath,amssymb,superscriptaddress, floatfix]{revtex4-2}
\usepackage{amsfonts}
\usepackage{graphicx}
\usepackage{grffile}
\usepackage{enumerate}
\usepackage{tensor}
\usepackage{mathrsfs}
\usepackage{comment}
\usepackage{amsthm} 
\usepackage{xcolor}
\usepackage{mathtools}
\usepackage{booktabs}
\usepackage{array}

\usepackage[english]{babel}
\usepackage{braket}
\usepackage{dcolumn}
\usepackage{bm}
\usepackage[hidelinks]{hyperref}
\usepackage{slashed}  
\usepackage{algorithm}
\usepackage{algpseudocode}

\newtheorem{theorem}{Theorem}

\begin{document}

\title{Quantum Memory Advantage from Contextuality}

\author{Shiroman Prakash}
\affiliation{Department of Physics and Computer Science, Dayalbagh Educational Institute, Agra, India}

\begin{abstract}
    Quantum contextuality is a vital non-classical resource, yet illuminating the precise mechanisms through which it enables unconditional computational advantages remains a challenge. We translate graph-theoretic formulations of contextuality into an unconditional quantum memory advantage for formal language recognition. We define a promise problem on an exclusivity graph $G$ where any classical finite automaton respecting exclusivity requires $N \ge \chi(G)$ memory states, whereas a QFA requires a memory of dimension $d = \xi(G)$. The gap between these bounds isolates a structural, information-theoretic incompatibility between classical and quantum descriptions that we term \textit{representational contextuality}. For Boolean orthogonality graphs, this exacts an exponential classical memory penalty ($d=\mathcal{O}(n)$ vs $N=2^{\Omega(n)}$). Finally, we demonstrate a sharp algorithmic phase transition: allowing the classical machine a finite confusability of mutually exclusive events reduces this exponential classical memory cost to $\mathcal{O}(n)$.
\end{abstract}

\maketitle

\section{Introduction}

Quantum contextuality seeks to characterize the non-classicality of quantum mechanics, proving that the measurement outcomes of physical observables cannot be explained by pre-existing, context-independent values~\cite{Bell1966, KochenSpecker1967}. While widely recognized as a vital resource for quantum information processing~\cite{acin2007device,Anders_2009,Raussendorf2013,acin2015combinatorial,nature,DPS2, Vega2017,Emeriau2022}, identifying precise operational mechanisms translating contextuality into an unconditional computational advantage remains an ongoing challenge. 

To better isolate these mechanisms, classical finite automata~\cite{kozen1997automata, hopcroft2007introduction, sipser2013introduction} and their quantum counterparts~\cite{KondacsWatrous1997, AmbainisFreivalds1998, MooreCrutchfield2000, BrodskyPippenger2002,AmbainisYakaryilmaz2021} provide an ideal theoretical playground. These sequential control protocols rely on a bounded internal state space to process information. The minimal size of this state space -- the \textit{state complexity} -- precisely quantifies the representational overhead incurred when a classical system attempts to simulate a quantum reality.

Here, we demonstrate that a structural prerequisite of contextuality -- the topological uncolorability of exclusivity graphs~\cite{Cameron2007, Cabello2008, CabelloSeveriniWinter2014, Ramanathan2014, Cabello2015} -- gives rise to a language recognition promise problem~\cite{Even1984PromiseProblems,Goldreich2006PromiseProblems} for which quantum finite automata (QFAs) possess an exponential memory advantage over exact classical automata. We show that adherence to physical exclusivity exacts a classical memory penalty equal to the integer gap between chromatic number and orthogonal rank, isolating an incompatibility between classical and quantum models we term \textbf{\textit{representational contextuality}}. Representational contextuality is a form of non-classicality directly tied to a computational resource: memory cost.

For the family of Boolean orthogonality graphs, this mismatch forces an exponential gap in classical versus quantum state complexity ($2^{\Omega(n)}$ vs $\mathcal O(n)$). However, we identify an algorithmic phase transition: permitting the classical automaton a finite statistical confusability between mutually exclusive outcomes beyond a critical threshold reduces the exponential classical state complexity to linear ($\mathcal{O}(n)$), providing an algorithmic analogue to the collapse of logical contextuality under unsharp measurements~\cite{meyer1999, kunjwal2015}.

\section{Representational contextuality and exclusivity graphs}

To connect contextuality to graph theory, we utilize Kochen-Specker configurations, formalized as \textit{exclusivity graphs} $G=(V,E)$~\cite{KCBS, Cameron2007, CabelloSeveriniWinter2014, Cabello2015}. Vertices $v_i \in V$ represent general experimental outcomes, and edges connect mutually exclusive outcomes. A quantum representation assigns a rank-1 projector $\hat{P}_i = \ket{v_i}\bra{v_i}$ to each vertex such that adjacent vertices map to orthogonal projectors. The minimum dimension $d$ required for a valid quantum representation is the \textit{orthogonal rank}, $\xi(G)$.

Conversely, to evaluate the classical representational cost, we employ Spekkens' ontological model framework~\cite{Spekkens_2005}, which defines a physical system as an epistemic probability distribution over an underlying space of ontic states, $\Lambda$. Each vertex $v_i \in V$ represents a distinct measurement event. Because adjacent vertices denote mutually exclusive outcomes, no single ontic state $\lambda \in \Lambda$ can simultaneously yield a non-zero probability for both. Therefore, for any given ontic state, the collection of measurement events it can support with non-zero probability must constitute an independent set of $G$. 

We now demand that a valid model must be capable of realizing every experimental outcome; thus, for each vertex, there must exist at least one ontic state in $\Lambda$ that allows its observation. Since every vertex must be assigned to at least one such ontic state, the available ontic states function as a complete independent set cover of the graph; such mapping is mathematically identical to a valid vertex coloring. The minimum size of the classical state space $N = |\Lambda|$ is therefore bounded by the integer chromatic number:
\begin{equation}
N \ge \chi(G).
\end{equation}

In standard statistical frameworks, an essential condition for state-independent contextuality is that the fractional chromatic number exceeds the orthogonal rank, $\chi_f(G) > \xi(G)$~\cite{Ramanathan2014, Cabello2015}, which ensures that the maximally mixed state violates a statistical contextuality inequality. Because the integer chromatic number inherently bounds its fractional counterpart ($\chi \ge \chi_f$), the integer gap $\chi(G) > \xi(G)$ serves as an even broader necessary-but-not sufficient condition for state-independent contextuality. 

Many graphs exist for which state-independent contextuality is absent (as evidenced by $\chi_f \le \xi$), yet the integer memory gap persists ($\chi > \xi$). A concrete demonstration is the 18-vertex graph join of the 13-vertex Yu-Oh graph~\cite{Yu_Oh_2012} and a 5-cycle~\cite{KCBS}, shown in Fig.~\ref{fig:counterexample}. Because $\chi_f < \xi$ for this graph, the Ramanathan-Horodecki criterion precludes universal statistical violation of noncontextual inequalities for arbitrary quantum states. Nevertheless, because $\chi > \xi$, an unavoidable increase in the size of state-space occurs if one attempts to simulate the graph's exclusivity relations using an ontological model.
We therefore formalize this integer gap as a distinct form of contextuality, which we term \textit{representational contextuality}. 

\begin{figure}[!htbp]
    \centering
    \includegraphics[width=0.7\linewidth]{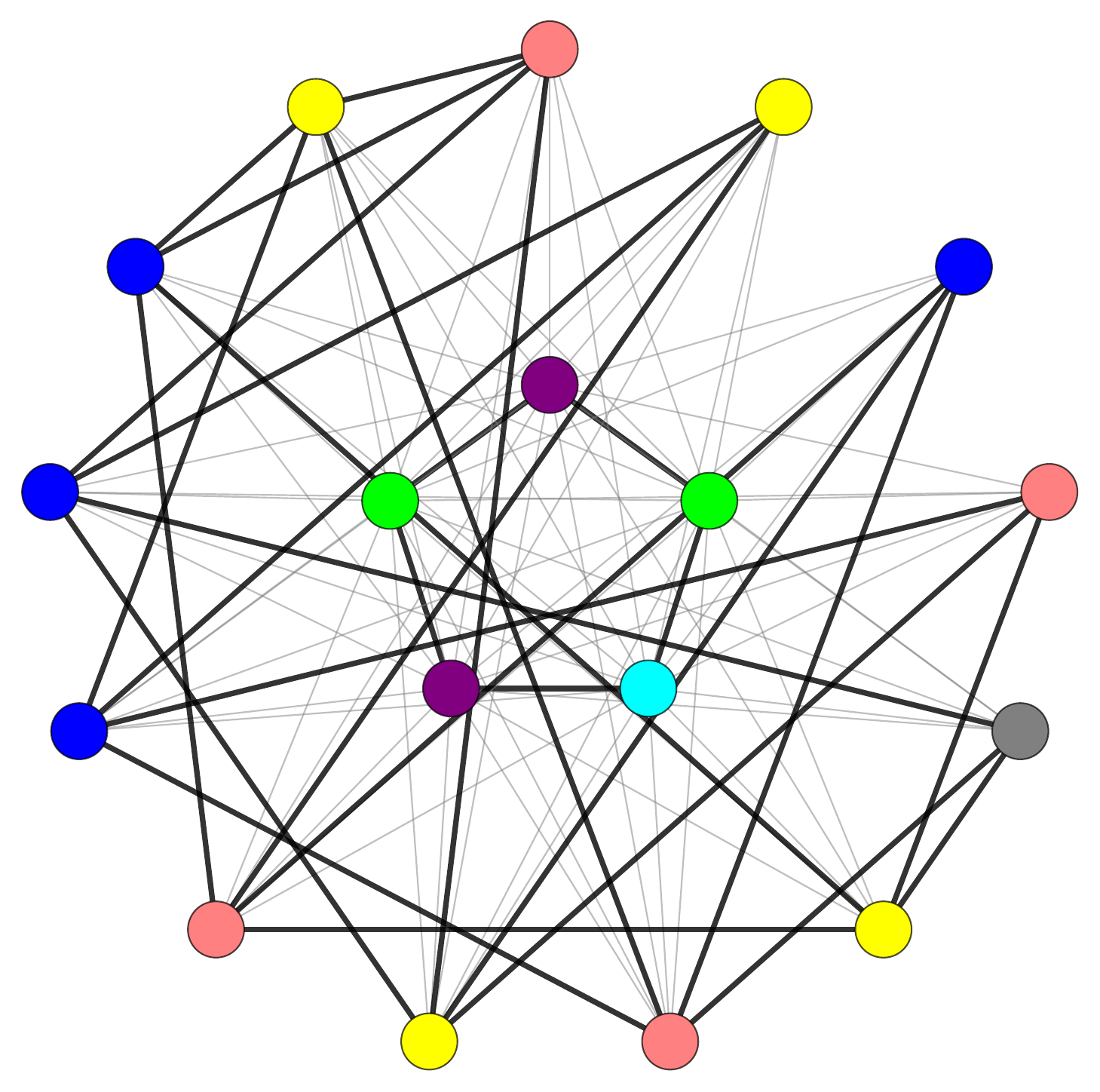}
 \caption{The 18-vertex graph join of the Yu-Oh graph (outer ring) and $C_5$ (inner ring), with colored vertices illustrating a valid $7$-coloring (representing the chromatic number $\chi=7$). Dark edges denote connections within the original subgraphs, while light edges represent the join operations. With invariants $\chi=7$, $\xi=6$, and $\chi_f \approx 5.68$, this graph cannot exhibit state-independent statistical contextuality ($\chi_f < \xi$). However, it exhibits representational contextuality, as the  minimum cardinality of its classical ontic state space exceeds the dimension of its optimal quantum representation ($\chi > \xi$).}
    \label{fig:counterexample}
\end{figure}

Let us also discuss the relation between representational contextuality and state-dependent contextuality. Under the Cabello-Severini-Winter framework~\cite{CabelloSeveriniWinter2014}, a graph avoids state-dependent contextuality for all observables only if its Lov\'asz number equals its weighted independence number, $\vartheta(G, w) = \alpha(G, w)$ for all weightings $w \ge 0$. By the Gr\"otschel--Lov\'asz--Schrijver theorem~\cite{GLS1981, knuth1993sandwichtheorem}, this equivalence holds if and only if $G$ is a perfect graph, which requires the chromatic number to equal the clique number, $\chi(G) = \omega(G)$~\cite{Berge1961}. Since any orthogonal representation of a clique requires mutually orthogonal projectors, the dimension is bounded by $\xi(G) \ge \omega(G)$. Therefore, any graph lacking state-dependent contextuality must satisfy $\chi(G) \le \xi(G)$. 

By contraposition, the presence of representational contextuality ($\chi(G) > \xi(G)$) is a sufficient condition to guarantee a state-dependent contextuality violation. Conversely, it is not a necessary condition; the 5-cycle graph possesses no representational gap ($\chi = \xi = 3$), yet famously supports state-dependent contextuality~\cite{KCBS}. 

To summarize, we establish a strict hierarchy among the sets $\mathcal{G}_{\text{SIC}}$, $\mathcal{G}_{\text{RC}}$, and $\mathcal{G}_{\text{SDC}}$ of exclusivity graphs supporting state-independent, representational, and state-dependent contextuality, respectively: 
\begin{equation}
\mathcal{G}_{\text{SIC}} \subset \mathcal{G}_{\text{RC}} \subset \mathcal{G}_{\text{SDC}}.
\end{equation}
This containment chain underscores that representational contextuality isolates an intrinsically memory-theoretic manifestation of non-classicality, that must be distinguished from both state-dependent and state-independent statistical contextuality.

\section{The Kochen-Specker Problem} 

We translate this structural gap into an unconditional state complexity advantage by defining the Kochen-Specker Problem (KSP) as a language recognition task on $G=(V,E)$. Formal definitions of the automata models and the promise problem framework are reviewed in the Supplemental Material.

The input alphabet is the vertex set, $\Sigma = V$. The automaton is fed strings of a constant length $L=2$, representing a sequential pair of experimental outcomes $w = v_i v_j$. This restriction to input strings of constant length is inspired by unconditional advantages in shallow-depth circuits derived from non-locality~\cite{Bravyi_2018}, and isolates the structural advantage of the quantum state space from trivial artifacts of accumulated continuous unitary rotations over asymptotically long strings.

The input strings obey a promise: the two symbols must either be identical ($v_i = v_j$) or mutually exclusive ($(v_i, v_j) \in E$). An automaton evaluates this language under two metrics:
\begin{itemize}
    \item \textbf{Completeness Error ($\epsilon_{\rm c}$)}: The maximum probability of incorrectly rejecting a valid string ($v_i = v_j$).
    \item \textbf{Soundness Error ($\epsilon_{\rm s}$)}: The maximum probability of falsely accepting an invalid string ($(v_i, v_j) \in E$).
\end{itemize}
Crucially, the soundness error $\epsilon_{\rm s}$ is the exact computational analogue of physical exclusivity. 

\section{State-Complexity of Sound Automata}

We first evaluate the memory required to resolve the KSP under absolute soundness ($\epsilon_{\text{s}} = 0$), demonstrating an unconditional memory penalty for classical automata that perfectly respect exclusivity.

\begin{theorem}
Any classical finite automaton (including a probabilistic finite automaton, PFA) that solves the Kochen-Specker Problem for $G=(V,E)$ with absolute soundness ($\epsilon_{\text{s}} = 0$) and non-zero completeness must possess an internal state space of size $N \ge \chi(G)$.
\label{thm:classical_bound}
\end{theorem}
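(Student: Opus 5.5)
The plan is to extract a proper coloring of $G$ from the internal states of any sound automaton, so that the number of states is at least $\chi(G)$. Fix a PFA $M$ with state set $Q$, $|Q|=N$, initial distribution $\pi_0$, stochastic transition matrices $T_v$ for each $v\in\Sigma=V$, and accepting set $F\subseteq Q$. For each vertex $v$, let $S_v\subseteq Q$ be the support of the distribution $\pi_0 T_v$ after reading the single symbol $v$. For each state $q$, let $A_v\subseteq Q$ be the set of states $q$ from which reading $v$ leads to $F$ with positive probability, i.e.\ $(T_v \mathbf{1}_F)(q)>0$.

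\textbf{Step 1 (completeness).} Nonzero acceptance on $vv$ means $\Pr[\text{accept } vv]=\sum_{q} (\pi_0T_v)(q)\,(T_v\mathbf 1_F)(q)>0$. Hence there is a state $q_v\in S_v\cap A_v$, and we choose one such witness $q_v$ for each vertex.

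\textbf{Step 2 (soundness).} For $(u,v)\in E$, perfect soundness on the string $uv$ gives $\sum_q (\pi_0T_u)(q)(T_v\mathbf 1_F)(q)=0$. Every term is nonnegative, so $S_u\cap A_v=\emptyset$. Now suppose $q_u=q_v$. Then $q_u\in S_u$ and $q_u=q_v\in A_v$, contradicting $S_u\cap A_v=\emptyset$. Therefore adjacent vertices receive distinct witnesses.

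\textbf{Step 3 (conclusion).} The map $v\mapsto q_v$ is a proper vertex coloring of $G$ using at most $|Q|=N$ colors, so $N\ge\chi(G)$. Equivalently, each color class $\{v: q_v=q\}$ is an independent set. This is exactly the ontic-state/independent-set-cover argument of the previous section, with the witness states $q_v$ playing the role of ontic states.

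\textbf{Possible obstacle.} The main issue is the model convention: whether the automaton uses an end-marker, a deterministic start state, or a one-way DFA as opposed to a PFA. Any final end-marker transition can be absorbed into $\mathbf 1_F$ as a nonnegative acceptance vector $a$, and the same support argument goes through with $A_v=\{q:(T_v a)(q)>0\}$. The key point is that only nonnegativity of the stochastic matrices is used. This is precisely what fails for a QFA, where interference allows amplitudes to cancel, so $\xi(G)$ dimensions suffice there.
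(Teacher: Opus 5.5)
Your proof is correct and is essentially the paper's argument. Your condition $q\in S_v\cap A_v$ is exactly the paper's condition $\mu_v(\lambda)>0$ and $\xi_v(\lambda)>0$ defining the active joint supports $J_\lambda$, and choosing one witness $q_v$ per vertex simply turns the paper's cover of $V$ by the independent sets $\{J_\lambda\}$ directly into a proper coloring, using only nonnegativity just as the paper does.
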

\begin{proof}[Proof sketch] We regard the $N$ internal memory states of the PFA as an ontic state space $\Lambda$. Processing a string $w=uv$ splits the PFA's operation: reading the first symbol $u$ prepares an epistemic distribution $\mu_u(\lambda)$, while the transition and acceptance rules for the second symbol $v$ define an indicator function $\xi_v(\lambda) \in [0, 1]$. The total probability of accepting the string is
\begin{equation}
P(\text{accept } uv) = \sum_{\lambda \in \Lambda} \mu_u(\lambda) \xi_v(\lambda).
\end{equation}
Absolute soundness ($\epsilon_{\text{s}} = 0$) demands $P(\text{accept } uv) = 0$ for all mutually exclusive inputs $(u, v) \in E$. As established in Section II, satisfying this condition for all adjacent vertices simultaneously requires the ontological supports of mutually exclusive preparations to be disjoint. Mapping each vertex to an internal state within its support thus yields a valid vertex coloring, establishing $N \ge \chi(G)$. A more detailed derivation is provided in the Supplemental Material.
\end{proof}

In contrast, a quantum machine bypasses this partitioning constraint by encoding exclusivity through vector orthogonality.

\begin{theorem}
There exists a measure-once QFA that solves the KSP using a quantum memory of dimension $d = \xi(G)$.
\label{thm:QFA_bound}
\end{theorem}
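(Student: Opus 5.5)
The construction will take an optimal orthogonal representation of $G$ in $\mathbb{C}^d$ with $d=\xi(G)$. By the definition of orthogonal rank, there are unit vectors $\{\ket{v}\}_{v\in V}$ in this space with $\braket{u|v}=0$ whenever $(u,v)\in E$. The idea is to read the first symbol as a state preparation and the second symbol as the inverse preparation, then measure overlap with a fixed reference state.

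Fix a reference vector $\ket{0}\in\mathbb{C}^d$ as the initial state. For each $v\in V$, choose a unitary $U_v$ with $U_v\ket{0}=\ket{v}$; one exists because $\ket{v}$ can be extended to an orthonormal basis. A measure-once QFA, however, applies a single unitary per symbol irrespective of position. The natural fix is to attach a small classical position flag, i.e.\ a clock qubit or a single extra basis dimension. I would rather stay in dimension exactly $d$, so I will handle the two-symbol protocol as follows.

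Exploit the constant length $L=2$ and let each symbol act by a reflection-type unitary. For each $v$ take the Householder-style unitary $U_v = I - 2\ket{w_v}\bra{w_v}$ with $\ket{w_v}\propto \ket{0}-\ket{v}$. This swaps $\ket{0}$ and $\ket{v}$ up to phase when $\ket{v}$ is chosen with $\braket{0|v}$ real (adjust global phases of the representation to ensure this). Being an involution, $U_v^2=I$, so $U_{v}U_{v}\ket{0}=\ket{0}$.

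The acceptance projector is $P_{\rm acc}=\ket{0}\bra{0}$.

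\emph{Completeness.} On input $vv$ the final state is exactly $\ket{0}$, so the input is accepted with probability $1$ and $\epsilon_{\rm c}=0$.

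\emph{Soundness.} On input $uv$ with $(u,v)\in E$, the acceptance amplitude is $\bra{0}U_vU_u\ket{0}=\bra{0}U_v\ket{u}$. Since $U_v$ is Hermitian and swaps $\ket{0}\leftrightarrow\ket{v}$, we have $\bra{0}U_v=\bra{v}$. Hence the amplitude is $\braket{v|u}=0$, giving $\epsilon_{\rm s}=0$.

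Degenerate cases need separate treatment. If $\ket{v}=\ket{0}$, take $U_v=I$; if $\ket{v}=-\ket{0}$ up to phase, absorb the phase. Both cases still satisfy $U_v\ket{0}=\ket{v}$ with $U_v$ Hermitian.

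The main obstacle is exactly this identity $\bra{0}U_v=\bra{v}$, which needs $U_v$ Hermitian as well as mapping $\ket{0}\mapsto\ket{v}$. The check is that a reflection through the hyperplane orthogonal to $\ket{0}-\ket{v}$ maps $\ket{0}\mapsto\ket{v}$ whenever the two vectors have equal norm and real overlap. Over $\mathbb{C}$, multiplying each $\ket{v}$ by a phase secures real overlap and does not affect the orthogonality relations. Once this holds, the QFA uses exactly $d=\xi(G)$ dimensions and solves the KSP with zero error. Contrasted with Theorem~\ref{thm:classical_bound}, this gives the claimed separation whenever $\chi(G)>\xi(G)$.
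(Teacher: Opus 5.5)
Your proposal is correct and follows essentially the same construction as the paper: Householder reflections $U_v = I - 2\ket{w_v}\bra{w_v}$ with phases fixed so that overlaps with the initial state are real, acceptance projector onto the initial state, completeness from $U_v^2 = I$, and soundness from Hermiticity giving $\bra{\psi_0}U_{v_B}\ket{v_A} = \braket{v_B|v_A} = 0$. The only cosmetic difference is the degenerate case $\ket{v} = \ket{\psi_0}$: the paper avoids it by choosing $\ket{\psi_0}$ not parallel to any $\ket{v_i}$, while you set $U_v = I$ there, which works equally well.
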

\begin{proof}
Construct a minimal-rank orthogonal representation mapping each $v_i \in V$ to a unit vector $\ket{v_i} \in \mathbb{C}^{\xi(G)}$. Let the initial memory state be a unit vector $\ket{\psi_0} \in \mathbb{C}^{\xi(G)}$ not parallel to any $\ket{v_i}$. Choose the (arbitrary) overall phase of each $\ket{v_i}$ so that $\braket{\psi_0|v_i} \in \mathbb{R}$. 

The acceptance measurement is the projector $\hat{P}_{\text{accept}} = \ket{\psi_0}\bra{\psi_0}$. For each symbol $v \in V$, the transition unitary is the Householder reflection $U_v = I - 2\ket{v'}\bra{v'}$ about the hyperplane orthogonal to $\ket{v'} = (\ket{\psi_0} - \ket{v}) / \|\ket{\psi_0} - \ket{v}\|$, which maps $U_v \ket{\psi_0} = \ket{v}$.
\begin{enumerate}
    \item \textbf{Valid Strings ($v_A v_A$):} The pre-measurement state is $U_{v_A}^2 \ket{\psi_0} = \ket{\psi_0}$. The acceptance probability is $\bra{\psi_0}\hat{P}_{\text{accept}}\ket{\psi_0} = 1$.
    \item \textbf{Invalid Strings ($v_A v_B$):} The memory state is $U_{v_B} U_{v_A}\ket{\psi_0} = U_{v_B} \ket{v_A}$. By Hermiticity, the acceptance amplitude is $\bra{\psi_0} U_{v_B} \ket{v_A} = \braket{v_B|v_A} = 0$.
\end{enumerate}
The QFA solves the KSP using exactly $d = \xi(G)$ dimensions.
\end{proof}

The memory requirements of these automata form a hierarchy governed by graph-theoretic invariants. A QFA maps adjacent inputs to orthogonal pure states; its quantum memory dimension is therefore the orthogonal rank $\xi(G)$. In contrast, the minimum state complexity of a deterministic classical automaton is constrained by the chromatic number $\chi(G)$. 

Although shared randomness cannot reduce the cardinality of state space, transitioning to entropic measures~\cite{Korner_1973, Csiszar_1990, Simonyi_1995} reveals an effective reduction in classical cost. As proven in the Supplemental Material, the worst-case entropic complexity of a sound classical automaton with shared randomness is bounded by the fractional chromatic number $\chi_f(G)$. Physically, this fractional bound dictates the theoretical minimum average memory per run required to store a historical sequence of the sampled hidden variables after optimal data compression; consequently, the gap $\chi > \chi_f$ implies an ontological model that demands a strictly larger repertoire of distinct physical states than its compressed information-theoretic footprint would otherwise necessitate. When classical machines are granted shared randomness and state complexity is measured via this Shannon entropy, the quantum memory advantage persists, but only for those exclusivity graphs where $\chi_f(G) > \xi(G)$.

For the family of $n$-bit Boolean-orthogonality graphs (detailed in the Supplemental Material), this representational mismatch scales exponentially~\cite{FranklRodl1987, Buhrman1998, Cameron2007}. The quantum machine requires a memory dimension scaling linearly, $d = \mathcal{O}(n)$, whereas any absolutely sound classical simulation faces an exponential memory penalty, $N \geq \chi_f \geq 2^{\Omega(n)}$.

\section{Phase Transition in State Complexity for Bounded-Error Automata}

To isolate the mechanism driving this separation, we permit the classical PFA to operate with a non-zero completeness and soundness error. 
\begin{theorem}
If a classical PFA evaluates the KSP on a graph $G$ with completeness error $\epsilon_{\text{c}} \ge 0$ and soundness error $\epsilon_{\text{s}} > 0$, its classical state complexity is lower-bounded by
$$N \ge  
\chi(G), $$ provided that the soundness error is below the critical threshold,  $$\epsilon_{\text{s}} \le \frac{(1 - \epsilon_{\text{c}})^2}{4\chi(G)}.$$
\label{thm:unsound}
\end{theorem}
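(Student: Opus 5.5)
We keep the ontological decomposition used for Theorem~\ref{thm:classical_bound}: with $\Lambda$ the $N$ internal states, reading $u$ prepares a distribution $\mu_u$ on $\Lambda$, and the second symbol $v$ together with the acceptance rule defines a response $\xi_v(\lambda)\in[0,1]$. Then $P(\text{accept } uv)=\sum_\lambda \mu_u(\lambda)\xi_v(\lambda)$. The promise gives
\[
\sum_\lambda \mu_u(\lambda)\xi_u(\lambda)\ge 1-\epsilon_{\rm c}
\]
for all $u$, and $\sum_\lambda \mu_u(\lambda)\xi_v(\lambda)\le\epsilon_{\rm s}$ for $(u,v)\in E$. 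The strategy is to extract from $(\mu,\xi)$ a proper colouring $c:V\to\Lambda$. We argue by contradiction: assume $N<\chi(G)$ and derive a lower bound on $\epsilon_{\rm s}$ that exceeds the threshold.

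\textbf{Step 1: choose a colour for each vertex.} For each vertex $u$, pick $c(u)=\arg\max_\lambda \mu_u(\lambda)\xi_u(\lambda)$. By pigeonhole, $\mu_u(c(u))\,\xi_u(c(u))\ge (1-\epsilon_{\rm c})/N$. Since both factors lie in $[0,1]$, each factor is at least $(1-\epsilon_{\rm c})/N$.

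\textbf{Step 2: a monochromatic edge forces large soundness error.} If $N<\chi(G)$, then $c$ is not a proper colouring. Hence some edge $(u,v)\in E$ has $c(u)=c(v)=\lambda^*$. Soundness on the string $uv$ then gives
\[
\epsilon_{\rm s}\ge \mu_u(\lambda^*)\,\xi_v(\lambda^*).
\]
By Step 1 applied to $u$ and to $v$ respectively, the right-hand side is at least $(1-\epsilon_{\rm c})^2/N^2$.

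\textbf{Step 3: compare with the threshold (main obstacle).} Step 2 yields $\epsilon_{\rm s}\ge(1-\epsilon_{\rm c})^2/N^2$, which scales as $1/N^2$ rather than the stated $1/(4\chi)$. To reach the paper's threshold we must sharpen the extraction. The first thing I would try is a \emph{balanced} selection. Among the states, keep only those with $\mu_u(\lambda)\ge(1-\epsilon_{\rm c})/(2N)$, which by a Markov-type argument still carry at least half of the success mass. Then choose $c(u)$ as the state maximising $\xi_u$ subject to this constraint. This should give $\xi_u(c(u))\ge(1-\epsilon_{\rm c})/2$ and $\mu_u(c(u))\ge(1-\epsilon_{\rm c})/(2N)$. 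A collision on an edge then forces
\[
\epsilon_{\rm s}\ge \frac{(1-\epsilon_{\rm c})^2}{4N}>\frac{(1-\epsilon_{\rm c})^2}{4\chi(G)},
\]
where the final strict inequality uses $N<\chi(G)$. This contradicts the hypothesis $\epsilon_{\rm s}\le(1-\epsilon_{\rm c})^2/(4\chi(G))$, so $N\ge\chi(G)$.

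The delicate point is whether the two halves of the success mass can be split this cleanly. Specifically, we need states with non-negligible preparation weight to also carry response at least $(1-\epsilon_{\rm c})/2$. I would verify this by splitting $\sum_\lambda\mu_u\xi_u$ into contributions from states with $\xi_u<(1-\epsilon_{\rm c})/2$ and states with $\mu_u<(1-\epsilon_{\rm c})/(2N)$. Each contribution is below $(1-\epsilon_{\rm c})/2$. Hence some state violates both smallness conditions, i.e.\ has both a large preparation weight and a large response.
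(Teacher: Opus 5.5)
Your proposal is correct. Step 3 is the actual proof; Steps 1--2 only give $(1-\epsilon_{\rm c})^2/N^2$ and can be dropped. It uses the paper's argument, run in the opposite direction.

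The paper fixes thresholds tied to the soundness error, $I_\lambda(\alpha)=\{u:\mu_u(\lambda)>\epsilon_{\rm s}^{\alpha},\ \xi_u(\lambda)>\epsilon_{\rm s}^{1-\alpha}\}$. Because these thresholds multiply to $\epsilon_{\rm s}$, independence of the sets is immediate, and the work goes into coverage. Splitting the completeness mass into a small-$\mu_u$ part (at most $N\epsilon_{\rm s}^{\alpha}$) and a small-$\xi_u$ part (at most $\epsilon_{\rm s}^{1-\alpha}$) shows that an uncovered vertex forces $N\ge(1-\epsilon_{\rm c}-\epsilon_{\rm s}^{1-\alpha})/\epsilon_{\rm s}^{\alpha}$. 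The paper then optimizes over $\alpha$.

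You instead tie the thresholds to $N$: $(1-\epsilon_{\rm c})/(2N)$ on $\mu_u$ and $(1-\epsilon_{\rm c})/2$ on $\xi_u$. The same mass-splitting now makes coverage automatic. The work moves to properness, where a monochromatic edge yields $\epsilon_{\rm s}\ge(1-\epsilon_{\rm c})^2/(4N)$.

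The two are the same computation read contrapositively. Your $\xi$-threshold is the paper's optimal $\epsilon_{\rm s}^{1-\alpha}$. At $\epsilon_{\rm s}=(1-\epsilon_{\rm c})^2/(4N)$, the paper's optimal $\mu$-threshold $2\epsilon_{\rm s}/(1-\epsilon_{\rm c})$ equals yours.

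What your version buys is economy. It lands directly on $N\ge\min\{\chi(G),(1-\epsilon_{\rm c})^2/(4\epsilon_{\rm s})\}$ without the optimization over $\alpha$ or the three-regime case analysis. It also covers $\epsilon_{\rm s}=0$ uniformly, recovering the sound-case bound. The paper has to treat that case separately, since $\epsilon_{\rm s}^{\alpha}$ vanishes there and its Case 1 bound divides by it.

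In the write-up, state the existence of a $\lambda$ with both $\mu_u(\lambda)\ge(1-\epsilon_{\rm c})/(2N)$ and $\xi_u(\lambda)\ge(1-\epsilon_{\rm c})/2$ as a separate lemma. Note that its strict inequalities use $\epsilon_{\rm c}<1$, which follows from $0<\epsilon_{\rm s}\le(1-\epsilon_{\rm c})^2/(4\chi(G))$.
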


\begin{proof}[Proof sketch]By relaxing strict soundness, the internal states of the automaton become noisy indicator functions that overlap over adjacent edges; our proof, given in the Supplemental Material, then determines how much error the system can absorb before the exclusivity graph's independent set structure breaks down.
\end{proof}

This mathematical structure reveals an operational asymmetry. If we require zero soundness error ($\epsilon_{\text{s}} = 0$), the classical memory cost remains bounded below by $\chi(G)$ even if the completeness error approaches 1. The memory overhead is dictated entirely by the logical constraint of exclusivity.

In quantum foundations, the Meyer-Kent-Clifton (MKC) theorem demonstrates that logical Kochen-Specker proofs are  fragile: finite measurement precision  allows classical hidden-variable models to reproduce the target quantum statistics~\cite{meyer1999, kent1999, mermin1999, clifton2000, appleby2002, kent2004, garola2004}. To address this vulnerability, Kunjwal and Spekkens established finite noise thresholds for the robust violation of statistical noncontextuality inequalities~\cite{kunjwal2015}. Theorem~\ref{thm:unsound} provides an algorithmic analogue to these foundational results, extending the noise-robustness framework from statistical correlations to representational contextuality. While the exact logical Kochen-Specker contradiction collapses under infinitesimal experimental imperfections, our topological memory penalty survives finite noise. A classical model can bypass the $\chi(G)$ bound only when permitted to confuse exclusive events past a critical, graph-dependent threshold:
\begin{equation}
    \epsilon_{\text{s}} > \frac{1}{4\chi(G)}.
\end{equation}

Once the acceptable soundness error exceeds this threshold, a classical automaton can exploit this error-tolerance to implement randomized fingerprinting protocols (see Supplemental Material). Specifically, executing a random walk on a Ramanujan expander graph~\cite{Alon_1986, Lubotzky_1988, Margulis_1988, Hoory_2006} reduces the classical state complexity from $2^{\Omega(n)}$ to $\mathcal{O}(n)$, eliminating the quantum memory advantage. 

Notably, however, the asymmetry of Theorem~\ref{thm:unsound} identifies a specific noise regime where this classical complexity transition is avoided. Under a quantum erasure channel, such as pure particle loss, a state either survives or is completely lost. This induces a completeness error ($\epsilon_{\rm{c}} = p > 0$) but preserves a soundness error of exactly zero ($\epsilon_{\rm{s}} = 0$). While practical implementations of the QFA will encounter a mixture of noise sources, Theorem~\ref{thm:unsound} demonstrates that in purely erasure-dominated settings, the classical memory cost remains bounded from below by $\chi(G)$, preserving the exponential memory advantage.

Beyond the theoretical preservation of this advantage under erasure, the experimental implementation of the $L=2$ QFA protocol itself is highly noise-resilient. The protocol maintains a constant, $\mathcal{O}(1)$ operational noise threshold independent of graph size for coherent and incoherent noise (see Supplemental Material for details).

\section{Conclusion and outlook}
Representational contextuality is a form of nonclassicality tied, by construction, to a quantifiable computational resource: memory. We have shown that it imposes an exponential memory penalty on any classical simulation that strictly enforces the physical exclusivity of incompatible outcomes. This separation is governed by a phase transition in classical state complexity that exhibits a distinct noise asymmetry: it survives completeness errors, yet vanishes when unsharp measurements introduce soundness (i.e., confusability) errors.

While conventional statistical contextuality is well suited for designing empirical tests to rule out classical noncontextual theories, representational contextuality characterizes a fundamental information-theoretic cost that must be paid to even attempt a classical noncontextual description of nature. Distinct from state-dependent or state-independent statistical contextuality, it defines the exact memory overhead required whenever any generalized physical theory, whose observables respect the topological structure of an exclusivity graph, is classically simulated using single-shot evaluations. Looking ahead, a direction for future work will be to extend these single-shot topological constraints into sequential measurement regimes, potentially bridging our framework with temporal memory bounds derived via Mealy machines~\cite{Kleinmann_2011, trandafir2025memorycostquantumcontextuality}.

From a practical standpoint, the QFA solution to the KSP provides a non-trivial benchmark for experimental quantum platforms~\cite{Kirchmair_2009, Tian2019, mert2022implementing, cardoso2024implementing}. Because the required quantum dimension scales linearly with the orthogonal rank ($d = \xi$), executing the QFA requires minimal hardware. For example, the 60-vertex graph of Waegell and Aravind~\cite{Waegell_2010} requires only a single 4-level qudit or two qubits (see Supplemental Material for details). Scaling this to a modest processor with $\sim 10$ qubits ($d=1024$) is sufficient to evaluate Boolean orthogonality graphs with an exponentially large chromatic number, ensuring a theoretical quantum advantage. However, while the QFA itself is highly noise-robust, demonstrating an advantage over classical fingerprinting requires maintaining a soundness error below the critical graph-dependent threshold. Achieving this, perhaps using error correction or in a system whose noise is entirely erasure-dominated, appears to be a challenging yet attainable benchmark for demonstrating definitive quantum memory advantage in early fault-tolerant architectures.

\begin{acknowledgments}
\textbf{Acknowledgments.}
The author thanks Prof. Prem Saran Satsangi for guidance and inspiration. The author also thanks R. Loganayagam and the International Centre for Theoretical Sciences (ICTS) for hospitality when part of this work was completed, including the Discussion Meeting: A Hundred Years of Quantum Mechanics (ICTS/qm1002025/01). This work was supported in part by the DST-SERB grant (CRG/2021/009137).
\end{acknowledgments}

\bibliography{contextuality}


\clearpage
\begin{center}
  \textbf{\large Supplemental Material: Quantum Memory Advantage from Contextuality}
\end{center}

\setcounter{equation}{0}
\setcounter{figure}{0}
\setcounter{table}{0}
\setcounter{page}{1}
\setcounter{section}{0}
\setcounter{theorem}{0}
\makeatletter

\renewcommand{\theequation}{S\arabic{equation}}
\renewcommand{\thefigure}{S\arabic{figure}}
\renewcommand{\thetable}{S\arabic{table}}
\renewcommand{\thesection}{S\arabic{section}}
\renewcommand{\thetheorem}{S\arabic{theorem}}

\input{supplemental-v2}

\end{document}

%% file: supplemental-v2.tex
\section{Mathematical Proofs for Classical Finite Automata}
\label{supp:probabilistic_automata}

\subsection{1. Review of finite automata}
\textbf{Deterministic finite automata (DFA).} A DFA models sequential computation under memory constraints, and must accept or reject an input string based on whether it belongs to a specified formal language. Formally, a DFA is defined by the 5-tuple $$M_c=(S,\Sigma,\delta,s_0,S_{\rm accept}),$$ where $S$ is a finite set of discrete internal states, $\Sigma$ is the input alphabet, $\delta: S \times \Sigma \to S$ is the deterministic transition function, $s_0 \in S$ is the initial state, and $S_{\rm accept} \subseteq S$ specifies the subset of accepting states. The state complexity, or memory cost, of a classical automaton is measured by the total number of distinct states $|S|$ required to correctly resolve the language.

\textbf{Probabilistic finite automata (PFA).} A PFA generalizes the DFA by permitting stochastic transitions, modeling classical systems that utilize probabilistic memory. Formally, a 1-way measure-once PFA is defined by the 5-tuple $$M_p=(S, \Sigma, \{T_\sigma\}_{\sigma \in \Sigma}, \mathbf{p}_0, \mathbf{f}_{\rm accept}).$$ Here, $S$ is a finite set of discrete states with $N=|S|$ representing the memory cost, $\Sigma$ is the input alphabet, and $\mathbf{p}_0$ is an $N$-dimensional stochastic vector representing the initial probability distribution over $S$. For each input symbol $\sigma \in \Sigma$, $T_\sigma$ is an $N \times N$ column-stochastic transition matrix. Finally, $\mathbf{f}_{\rm accept}$ is an $N$-dimensional binary indicator vector (where $f_i \in \{0,1\}$) specifying the accepting states. For an input word $w=\sigma_1\ldots\sigma_k$, the probability distribution of the memory prior to measurement is given by $\mathbf{p}_w = T_{\sigma_k}\ldots T_{\sigma_1}\mathbf{p}_0$. The total operational probability that the string is accepted is evaluated as $\mathbf{f}_{\rm accept}^T \mathbf{p}_w$.

\textbf{Quantum finite automata. (QFA)} We will demonstrate quantum memory advantage using a 1-way measure-once QFA~\cite{MooreCrutchfield2000,AmbainisYakaryilmaz2021}, defined by the 5-tuple $$M_q=(\mathcal{H}, \Sigma,\{U_\sigma\}_{\sigma \in \Sigma},\ket{\psi_0},\hat{P}_{\rm accept}).$$ The quantum memory is a $d$-dimensional Hilbert space $\mathcal{H}$ initialized in a pure state $\ket{\psi_0}$. As it reads each input symbol $\sigma \in \Sigma$, the QFA applies a unitary evolution $U_\sigma$ to the memory, and measures $\hat{P}_{\rm accept}$ when the string terminates. For an input word $w=\sigma_1\ldots\sigma_k$, the physical state prior to measurement is $\ket{\psi_w} = U_{\sigma_k}\ldots U_{\sigma_1}\ket{\psi_0}$; the string is accepted with probability $\bra{\psi_w} \hat{P}_{\rm accept} \ket{\psi_w}$. The quantum memory cost is defined by the dimension $d = \dim(\mathcal{H})$.

\textbf{Promise problems.} A promise problem is a language recognition task where input strings are guaranteed to belong to a restricted subset $P \subseteq \Sigma^*$~\cite{Even1984PromiseProblems,Goldreich2006PromiseProblems}. The automaton must only distinguish valid strings belonging to a target language $L \subset P$ from invalid strings in $P \setminus L$. 

It is also conventional to allow automata to succeed with a finite probability of error. Let $P_{\rm valid}=1-\epsilon_{\rm c}$ and $P_{\rm invalid}=\epsilon_{\rm s}$ be the probabilities of accepting valid and invalid input strings, respectively. An automaton is \textit{sound} if $P_{\rm invalid}=0$ and \textit{complete} if $P_{\rm valid}=1$. The automaton is said to function with bounded-error if the acceptance gap $\Delta P=P_{\rm valid}-P_{\rm invalid} >0$~\cite{Rabin1963,AmbainisFreivalds1998}.

\subsection{2. State complexity of DFA solution to KSP}
We first establish that any classical DFA that solves the KSP induces a valid vertex coloring on the exclusivity graph $G$.

\begin{theorem}Any classical DFA that correctly solves the KSP for an exclusivity graph $G=(V,E)$ must possess at least $\chi(G)$ internal states, where $\chi(G)$ is the chromatic number of $G$.\end{theorem}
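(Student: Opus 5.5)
The proof will define a coloring directly from the DFA's dynamics. Let $M_c=(S,\Sigma,\delta,s_0,S_{\rm accept})$ correctly solve the KSP, i.e.\ it accepts every string $vv$ and rejects every string $uv$ with $(u,v)\in E$. I will assign to each vertex $v\in V$ the color
\begin{equation*}
c(v) := \delta(s_0, v) \in S,
\end{equation*}
the internal state reached after reading the first symbol. I then show that $c$ is a proper vertex coloring of $G$. Since $c$ takes at most $|S|$ distinct values, this gives $|S|\ge\chi(G)$.

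To verify properness, suppose toward contradiction that $(u,v)\in E$ but $c(u)=c(v)=s$. Because the DFA is deterministic, the final state on input $uv$ is $\delta(\delta(s_0,u),v)=\delta(s,v)$. The final state on the valid input $vv$ is $\delta(\delta(s_0,v),v)=\delta(s,v)$. These coincide, so the two inputs are both accepted or both rejected. But $vv$ must be accepted (it is valid) and $uv$ must be rejected (by the promise, $(u,v)\in E$ makes it invalid). This is a contradiction, so adjacent vertices receive distinct states and $c$ is proper.

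There is no real obstacle here. The only point needing care is to state precisely what \emph{correctly solves} means for a DFA under the promise: exact acceptance of all $vv$ and exact rejection of all edge strings $uv$. It is also worth noting that the argument uses only the valid string $vv$ and the single edge string $uv$ for each edge, and it does not depend on the choice of accepting set. The same colouring map, $v\mapsto$ support of $\mu_v$, is what underlies the PFA statement of Theorem~\ref{thm:classical_bound}, so this deterministic case is its simplest instance.
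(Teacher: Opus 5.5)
Your proof is correct and is essentially the paper's argument: both take the first-symbol map $v\mapsto\delta(s_0,v)$ as the colouring and observe that a collision $\delta(s_0,u)=\delta(s_0,v)=s$ on an edge $(u,v)\in E$ sends the valid string $vv$ and the invalid string $uv$ to the same final state $\delta(s,v)$. The paper phrases this as a contradiction from assuming $|S|<\chi(G)$, while you show properness directly.

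One caveat concerns only your closing aside. In the PFA case the colour classes must be the joint supports $\{v:\mu_v(\lambda)>0,\ \xi_v(\lambda)>0\}$, as in the Supplemental proof, not merely the supports of $\mu_v$, since adjacent vertices can both place weight on a ``dead'' state where $\xi$ vanishes. The two notions coincide in the deterministic setting.
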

\begin{proof} Let $M_c=(S, \Sigma, \delta, s_0, S_{\rm accept})$ be a DFA that solves the KSP. Assume for contradiction that the state space satisfies $|S| < \chi(G)$. Let $v$ be the first symbol in the input string, and let $\delta(s_0,v)$ denote the state the DFA enters after reading $v$. The transition map $v \mapsto \delta(s_0,v)$ assigns an internal memory state to each vertex $v \in V$. Because $|S|< \chi(G)$, this mapping cannot constitute a valid vertex coloring. 

By the pigeonhole principle, there must exist at least two adjacent vertices, $v_A$ and $v_B$ with $(v_A, v_B) \in E$, that map to the identical internal state: $\delta(s_0, v_A)=\delta(s_0, v_B)=s_{\rm collision}$. 
For the valid input string $v_A v_A$, the DFA transitions to $s_{\rm collision}$ and then to $\delta(s_{\rm collision}, v_A)$. Correct acceptance requires $\delta(s_{\rm collision},v_A) \in S_{\rm accept}$. 
Conversely, for the invalid string $v_B v_A$, the DFA transitions to $s_{\rm collision}$ and then to $\delta(s_{\rm collision}, v_A)$. Correct rejection requires $\delta(s_{\rm collision},v_A) \notin S_{\rm accept}$. This is a direct contradiction. Hence, $|S| \geq \chi(G)$. \end{proof}

\subsection{3. PFAs and Ontological Models}
We formalize the operational mechanics of a classical probabilistic finite automaton (PFA) by mapping it onto the framework of classical ontological models~\cite{Spekkens_2005}. Let $S = \{s_1, s_2, \ldots, s_N\}$ be the internal state space of the PFA, which we regard as a set of ontic states $\Lambda$. Upon reading a vertex symbol $u$, the machine transitions into an epistemic state preparation $\mu_u(\lambda)$, where $\sum_{\lambda \in \Lambda} \mu_u(\lambda) = 1$. Upon reading the second symbol $v$, the combined transition and termination rules act as a measurement defined by an indicator function $\xi_v(\lambda) \in [0, 1]$. The probability of accepting the string $uv$ is:
\begin{equation}
P(\text{accept } uv) = \sum_{\lambda \in \Lambda} \mu_u(\lambda) \xi_v(\lambda).
\end{equation}

For some specified completeness error $\epsilon_{\rm c}$ and soundness error $\epsilon_{\rm s}$, the PFA's output must satisfy:
\begin{align}
\text{\textbf{Acceptance:}} \quad & \sum_{\lambda \in \Lambda} \mu_v(\lambda) \xi_v(\lambda) \ge 1 - \epsilon_{\rm c} \quad \forall v \in V, \label{eq:acceptance} \\
\text{\textbf{Rejection:}} \quad & \sum_{\lambda \in \Lambda} \mu_u(\lambda) \xi_v(\lambda) \le \epsilon_{\rm s} \quad \forall (u,v) \in E. \label{eq:rejection}
\end{align}

\subsection{4. Sound PFA Complexity (Proof of Theorem 1)}
We first consider a PFA that is permitted an imperfect completeness ($1>\epsilon_{\rm c} \ge 0$) but is perfectly sound ($\epsilon_{\rm s} = 0$), and provide a proof of Theorem 1 in the main text. The rejection constraint Eq.~\eqref{eq:rejection} requires that for any pair of adjacent vertices $(u, v) \in E$, $\sum_{\lambda \in \Lambda} \mu_u(\lambda) \xi_v(\lambda) = 0$. Because $\mu_u(\lambda)$ and $\xi_v(\lambda)$ are non-negative, this requires:
\begin{equation}
\mu_u(\lambda) \xi_v(\lambda) = 0 \quad \forall \lambda \in \Lambda, \quad \forall (u,v) \in E. \label{eq:soundness_product_strict}
\end{equation}

For each ontic state $\lambda \in \Lambda$, we define its active joint support $J_\lambda \subseteq V$ as:
\begin{equation}
J_\lambda = \{ v \in V \mid \mu_v(\lambda) > 0 \text{ and } \xi_v(\lambda) > 0 \}.
\end{equation}
$J_\lambda$ must be an independent set of $G$. If there existed adjacent vertices $u, v \in J_\lambda$, then $\mu_u(\lambda)\xi_v(\lambda) > 0$, violating Eq.~\eqref{eq:soundness_product_strict}. 

As long as the completeness requirement, Eq.~\eqref{eq:acceptance}, is non-trivial ($\epsilon_{\rm c} < 1$), the sum for any identical sequence must satisfy $$\sum_{\lambda \in \Lambda} \mu_v(\lambda) \xi_v(\lambda) \geq 1-\epsilon_{\rm c}>0.$$ This requires that every vertex $v \in V$ belongs to at least one active joint support set $J_{\lambda^*}$. Consequently, the collection of independent sets $\{J_\lambda\}_{\lambda \in \Lambda}$ forms a complete cover of $V(G)$. By definition, the minimum number of independent sets required to cover a graph is $\chi(G)$, yielding:
\begin{equation}
N_{\text{sound PFA}} \ge \chi(G),
\end{equation}
proving Theorem 1 of the main text.

\subsection{5. Weakly Unsound PFAs (Proof of Theorem 3)}
We now relax the soundness constraint and analyze a weakly unsound PFA where $\epsilon_{\rm s} > 0$. We introduce a continuous splitting parameter $\alpha \in (0, 1]$ and define a subset of vertices $I_\lambda(\alpha)$ for each state $\lambda$:
\begin{equation}
I_\lambda(\alpha) = \left\{ u \in V(G) : \mu_u(\lambda) > \epsilon_{\rm s}^\alpha \quad \text{and} \quad \xi_u(\lambda) > \epsilon_{\rm s}^{1-\alpha} \right\}
\end{equation}

We now establish that $I_\lambda(\alpha)$ is an independent set of $G$: Assume for contradiction that $u, v \in I_\lambda(\alpha)$ are adjacent, $(u, v) \in E$. By definition, $\mu_u(\lambda) > \epsilon_{\rm s}^\alpha$ and $\xi_v(\lambda) > \epsilon_{\rm s}^{1-\alpha}$. The total operational probability of accepting the adjacent string $uv$ satisfies:
\begin{equation}
P(\text{accept } uv) \ge \mu_u(\lambda) \xi_v(\lambda) > \left(\epsilon_{\rm s}^\alpha\right) \left(\epsilon_{\rm s}^{1-\alpha}\right) = \epsilon_{\rm s}.
\end{equation}
This directly violates the soundness rejection constraint $P(\text{accept } uv) \le \epsilon_{\rm s}$, proving that $I_\lambda(\alpha)$ is a valid independent set.

For any vertex $u$, we partition the ontic space $\Lambda$ into three disjoint sets relative to the threshold constraints:
\begin{align}
\Lambda_{\rm wp}^{(u)} &= \{ \lambda \in \Lambda \mid \mu_u(\lambda) \le \epsilon_{\rm s}^\alpha \}, \\
\Lambda_{\rm wm}^{(u)} &= \{ \lambda \in \Lambda \mid \mu_u(\lambda) > \epsilon_{\rm s}^\alpha, \text{ and } \xi_u(\lambda) \le \epsilon_{\rm s}^{1-\alpha} \}, \\
\Lambda_{\rm pass}^{(u)} &= \{ \lambda \in \Lambda \mid u \in I_\lambda(\alpha) \}.
\end{align}

We bound the maximum possible contribution of $\Lambda_{\rm wp}$ and $\Lambda_{\rm wm}$ to the completeness sum:
\begin{equation}
\sum_{\lambda \in \Lambda_{\rm wp}} \mu_u(\lambda) \xi_u(\lambda) \le \sum_{\lambda \in \Lambda_{\rm wp}} \left( \epsilon_{\rm s}^\alpha \right) (1) \le N\epsilon_{\rm s}^\alpha,
\end{equation}
\begin{equation}
\sum_{\lambda \in \Lambda_{\rm wm}} \mu_u(\lambda) \xi_u(\lambda) \le \epsilon_{\rm s}^{1-\alpha} \sum_{\lambda \in \Lambda_{\rm wm}} \mu_u(\lambda) \le \epsilon_{\rm s}^{1-\alpha}.
\end{equation}
Subtracting these bounds from the total required completeness yields:
\begin{equation}
\sum_{\lambda : u \in I_\lambda(\alpha)} \mu_u(\lambda) \xi_u(\lambda) \ge 1 - \epsilon_{\rm c} - N\epsilon_{\rm s}^\alpha - \epsilon_{\rm s}^{1-\alpha}. \label{eq:weight-in-set}
\end{equation}

We analyze the graph coverage under two scenarios:
\begin{itemize}
    \item \textbf{Case 1 (Incomplete Coverage):} If there exists a vertex $u$ not contained in any independent set $I_\lambda(\alpha)$, the left side of Eq.~\eqref{eq:weight-in-set} evaluates to $0$. Rearranging the inequality yields:
    \begin{equation}
    N \ge \frac{1 - \epsilon_{\rm c} - \epsilon_{\rm s}^{1-\alpha}}{\epsilon_{\rm s}^\alpha}.
    \end{equation}
    \item \textbf{Case 2 (Complete Coverage):} If every vertex $u \in V(G)$ is covered, the collection $\{I_\lambda(\alpha)\}_{\lambda \in \Lambda}$ forms a valid graph coloring, requiring $N \ge \chi(G)$.
\end{itemize} 
We therefore have,
\begin{equation}
N \ge \min \left( \chi(G), \max_{\alpha \in (0,1]} \left[ \frac{1 - \epsilon_{\rm c} - \epsilon_{\rm s}^{1-\alpha}}{\epsilon_{\rm s}^\alpha} \right] \right).
\end{equation}
Maximizing over  $\alpha \in (0, 1]$ and simplifying, we obtain 
$$N \ge \begin{cases} 
\chi(G) & \text{if } \epsilon_{\text{s}} \le \frac{(1 - \epsilon_{\text{c}})^2}{4\chi(G)}, \\
\frac{(1 - \epsilon_{\text{c}})^2}{4\epsilon_{\text{s}}} & \text{if } \frac{(1 - \epsilon_{\text{c}})^2}{4\chi(G)} < \epsilon_{\text{s}} < \frac{1 - \epsilon_{\text{c}}}{2}, \\
1 - \epsilon_{\text{c}} - \epsilon_{\text{s}} & \text{if } \epsilon_{\text{s}} \ge \frac{1 - \epsilon_{\text{c}}}{2.},
\end{cases}$$ which yields
Theorem 3 of the main text.

\subsection{6. Entropic Memory Cost and the Fractional Chromatic Number}\label{sec:fractional_chromatic}We now prove that when a classical automaton utilizes shared randomness, its memory overhead for sound solution to the KSP, under an adversarial input distribution, is bounded by the fractional chromatic number $\chi_f(G)$ using an information-theoretic metric. 

Let a classical automaton with shared randomness be defined as a convex mixture of deterministic finite automata $\{M_r\}$, where the shared random parameter $r$ is sampled from a probability distribution $p(r)$ over a compact space $R$ prior to receiving the input. 

To maintain perfect soundness, each individual machine $M_r$ must map the input vertices $V$ to an internal state space $\Lambda$ such that adjacent vertices never share a state. Formally, for any choice of $r$, if $(u, v) \in E(G)$, then $M_r(u) \neq M_r(v)$. Consequently, for a fixed $r$, the preimage of any state $\lambda \in \Lambda$, given by:
\begin{equation}I_{r, \lambda} = \{v \in V \mid M_r(v) = \lambda\},\end{equation} must form a valid independent set of $G$.

Now, let an adversarial opponent select an input vertex $v \in V$ according to a probability distribution $P_V(v)$ of their choice. Given an input $v$, the internal state $S$ of the composite automaton becomes a random variable whose conditional probability distribution is governed by the shared randomness:
\begin{equation}
    P(S = \lambda \mid V = v) = \sum_{r} p(r) \delta(M_r(v), \lambda).
\end{equation}
The unconditional probability distribution of the automaton occupying state $\lambda$ across the protocol is: 
\begin{equation}
    P_S(\lambda) = \sum_{v \in V} P_V(v) P(S = \lambda \mid V = v).
\end{equation}  
Instead of measuring the number of internal states of the automaton (which is still bounded from below by the integer chromatic number $\chi(G)$ for each value of $r$), we define the \textit{entropic state complexity} via the Shannon entropy of the state space:
\begin{equation}
    H(S) = -\sum_{\lambda \in \Lambda} P_S(\lambda) \log_2 P_S(\lambda).
\end{equation}
By K\"orner's graph source coding theorem~\cite{Korner_1973}, the Shannon entropy of any randomized representation where states correspond to independent sets is lower-bounded by the K\"orner graph entropy $H(G, P_V)$ of the input distribution:
$$H(S) \ge H(G, P_V),$$ where $H(G, P_V)$ is defined via the vertex-packing polytope $VP(G)$ as:
\begin{equation}
    H(G, P_V) = \min_{a \in VP(G)} \sum_{v \in V} -P_V(v) \log_2 a_v.
\end{equation}
We allow the adversarial opponent to choose the worst-case input distribution $P_V^*(v)$ that maximizes the above entropy. Under this minimax adversarial formulation, a theorem by Csisz\'ar, K\"orner, Lov\'asz, Marton, and Simonyi~\cite{Csiszar_1990, Simonyi_1995} establishes that the maximum graph entropy over all possible vertex distributions is explicitly given by the fractional chromatic number:
\begin{equation}\max_{P_V} H(G, P_V) = \log_2 \chi_f(G).\end{equation}
Therefore, the worst-case entropic state-complexity satisfies:
\begin{equation}\max_{P_V} H(S) \ge \log_2 \chi_f(G).\end{equation}

For the pure-state QFA solution to the KSP presented in the main text, the corresponding worst-case entropic quantum state complexity remains $\log_2 \xi(G)$. However, if one were to consider a more general QFA architecture -- where the quantum memory is allowed to be a mixed state and the transition functions are arbitrary completely positive trace-preserving (CPTP) maps -- we expect that the worst-case entropic quantum state complexity would be bounded by the Lov\'asz number, $\log_2 \vartheta(G)$~\cite{Duan2013, Boreland_2021, Boreland_2022}.

\section{Boolean-Orthogonality Graphs}
\label{sec:exponential_advantage}

To translate representational contextuality into an exponential quantum memory advantage, we examine the family of \textit{Boolean-orthogonality graphs}, denoted $\Omega_n$~\cite{Buhrman1998,Cameron2007}. For an integer $n$ that is a multiple of 4, the vertex set $V$ of $\Omega_n$ comprises $2^{n-1}$ binary vectors of length $n$ containing an even number of $1$'s. Two vertices share an edge if and only if their Hamming distance is exactly $n/2$. (See Fig.~\ref{fig:ortho_graph} for an illustration of  $\Omega_4$.)

\begin{figure}[!htbp]
    \centering
    \includegraphics[width=0.6\columnwidth]{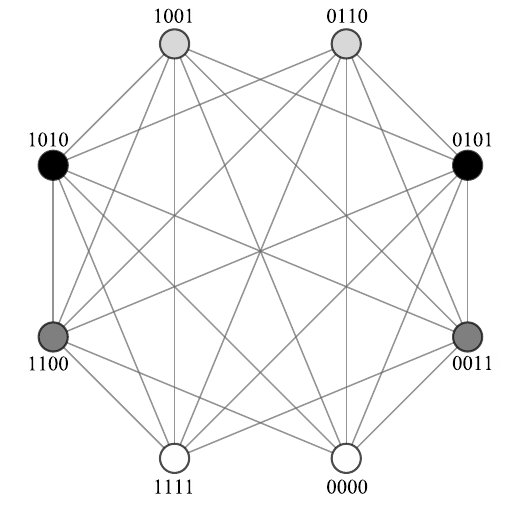}
    \caption{The complete Boolean-orthogonality graph $\Omega_4$. Vertices are colored such that no two connected vertices share a color, illustrating that $\chi(\Omega_4)=4$.}
    \label{fig:ortho_graph}
\end{figure}

An exact orthogonal representation of $\Omega_n$ is constructed by mapping each binary vector coordinate to a real unit vector in $\mathbb{R}^n$ via the substitution $0 \to 1/\sqrt{n}$ and $1 \to -1/\sqrt{n}$. If the Hamming distance between two binary strings is $n/2$, their corresponding vectors in $\mathbb{R}^n$ are orthogonal, yielding an orthogonal rank of $\xi(\Omega_n) = n$. 

Conversely, the classical memory bound is dictated by the chromatic number $\chi(\Omega_n)$. The Frankl-R\"odl theorem~\cite{FranklRodl1987, godsil2008coloring} bounds the independence number $\alpha(\Omega_n)$ to be exponentially small relative to the total vertex size: $\alpha(\Omega_n) \le (2-\delta)^n$ for a positive constant $\delta \approx 0.05$. Because $\chi(G) \ge \chi_F(G) \ge |V|/\alpha(G)$, and given the total vertex count $|V| = 2^{n-1}$, we arrive at the lower bound:
\begin{equation}
\chi(\Omega_n) \geq \chi_F(\Omega_n)\geq\frac{2^{n-1}}{(2 - \delta)^n} = 2^{\Omega(n)}.
\end{equation}
Evaluating the KSP on $\Omega_n$ therefore establishes a task where a absolutely-sound classical machine requires $N = 2^{\Omega(n)}$ states, while the quantum automaton requires only $d = \mathcal{O}(n)$ dimensions (Fig.~\ref{fig:scaling_plot}). This gap applies to the entropic state complexity, i.e., $\chi_f(G)$, as well.

\begin{figure}[!htbp]
    \centering
    \includegraphics[width=0.8\columnwidth]{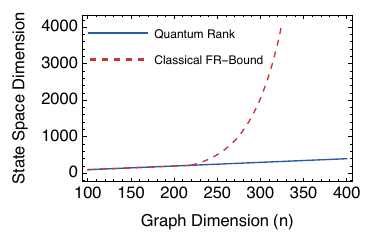}
    \caption{Scaling of the state space dimension as a function of the graph dimension $n$. While the quantum memory cost $\xi(\Omega_n)$ (blue) grows linearly, the lower bound on the classical memory cost $\chi(\Omega_n)$ (red dashed), derived from the Frankl-Rodl (FR) theorem~\cite{FranklRodl1987}, grows exponentially.}
    \label{fig:scaling_plot}
\end{figure}

Let us conclude by addressing a potential concern about the large alphabet size. Although the alphabet size scales exponentially as $2^{n-1}$, the transition unitaries $\{U_v\}$ can be dynamically synthesized in real time via a quantum circuit of size $\mathcal{O}(n)$ using standard techniques. Consequently, the classical controller requires no exponential lookup table and only $\mathcal{O}(n)$ active memory to process each input symbol, preserving the unconditional memory separation.

\section{ Fingerprinting via Expander Graphs}
\label{sec:expander_fingerprinting}

When the soundness error matches or exceeds the critical threshold $\epsilon_{\rm s} \ge 1/\chi(G)$, the classical memory bound no longer applies. Here we provide an explicit PFA for the KSP on Boolean orthogonality graphs, demonstrating that the state complexity collapses from exponential scaling to linear scaling, $\mathcal{O}(n)$. 

Let each vertex $u \in V(\Omega_n)$ be represented by its $n$-bit coordinate string. We encode this string via an asymptotically good linear error-correcting code $C: \{0,1\}^n \to \{0,1\}^m$, where $m = c \cdot n$ for a constant $c > 1$, and the code possesses a minimum relative Hamming distance $\delta > 0$. For any two distinct vertices $u \neq v$, their codewords $C(u)$ and $C(v)$ must disagree on a subset of indices $B \subseteq \{1, \ldots, m\}$ of size $|B| \ge \delta m$.

Let $H$ be a $d$-regular Ramanujan expander graph whose vertex set corresponds to the codeword indices $\{1, \ldots, m\}$. Since $H$ is a Ramanujan graph, the second largest absolute eigenvalue of its normalized adjacency matrix is bounded by the Ramanujan bound $\lambda_2 \le 2\sqrt{d-1}/d$~\cite{Alon_1986, Lubotzky_1988, Margulis_1988, Hoory_2006}. The PFA protocol executes as follows:
\begin{enumerate}
    \item \textbf{Random-Walk Sampling:} The PFA uses internal randomness to sample a uniform random path $W = (x_1, x_2, \ldots, x_{k+1})$ of fixed length $k$ on $H$.
    \item \textbf{First Symbol Transition:} Upon reading $u$, the PFA transitions into an internal memory state uniquely labeled by the path and sampled bit values:
    \begin{equation}
    s = \left(W, C(u)_{x_1}, C(u)_{x_2}, \ldots, C(u)_{x_{k+1}}\right).
    \end{equation}
    \item \textbf{Second Symbol Transition:} Upon reading $v$, the PFA evaluates $C(v)_{x_i}$ along the identical path $W$, accepting if and only if $C(v)_{x_i} = C(u)_{x_i}$ for all $i$.
\end{enumerate}

\paragraph{State Complexity Scaling:} The total number of unique paths of length $k$ starting from an arbitrary vertex in $H$ is $m \cdot d^k$. Recording $k+1$ bits yields $2^{k+1}$ configurations per path. Since $m = c \cdot n$, the total internal classical state space size is:
\begin{equation}
N_{\text{expander PFA}} = m \cdot d^k \cdot 2^{k+1} = \left(c \cdot d^k 2^{k+1}\right) \cdot n = \mathcal{O}(n).
\end{equation}
Because the vertex degree $d$ and path length $k$ are fixed constants independent of $n$, the classical memory scales linearly with $n$, matching the asymptotic quantum scaling $d = \xi(G) = \mathcal{O}(n)$.

\paragraph{Error Analysis:} We verify that this linear-state machine meets the bounded-error promise criteria:
\begin{itemize}
    \item \textbf{Completeness:} If $u = v$, then $C(u) = C(v)$, yielding identical bit evaluations along every step of any path $W$, ensuring perfect completeness ($\epsilon_{\rm c} = 0$).
    \item \textbf{Soundness:} If $u \neq v$, the protocol falsely accepts if and only if the random walk $W$ avoids the disagreement set $B$ entirely, remaining trapped within the agreement set $A = \{1, \dots, m\} \setminus B$. The density of the agreement set is bounded by $\rho = |A|/m \le 1 - \delta$. By the expander walk sampling theorem~\cite{Ajtai1987}, the false-positive soundness error is bounded by:
    \begin{equation}
    \epsilon_{\rm s} \le \left( \rho + \lambda_2 \right)^k \le \left( 1 - \delta + \frac{2\sqrt{d-1}}{d} \right)^k.
    \end{equation}
\end{itemize}
By choosing a sufficiently large constant degree $d$, the spectral gap can be expanded to ensure $\lambda_2 < \delta$, forcing the base of the exponent $|(\rho+\lambda_2)|<1$. Consequently, by choosing the walk length $k$ to be sufficiently large, the soundness error can be driven below any $\mathcal O(1)$ threshold.

\section{Noise Robustness of the Quantum Finite Automaton}
\label{sec:noise_robustness}

As discussed in the main text, we now prove that the QFA solution to the KSP  maintains a constant $\mathcal{O}(1)$ noise threshold independent of the graph's size or chromatic number $N$. Because our KSP construction employs discrete Householder reflections evaluated over a finite input length of $L=2$, continuous unitary errors do not accumulate over scaling string lengths.

\begin{theorem}Let $M_q$ be the $d$-dimensional QFA constructed for the KSP. In the presence of physical noise, the QFA continues to solve the promise problem with a positive acceptance gap $\Delta P > 0$, under the following $N$-independent noise thresholds:
\begin{enumerate}
    \item[(a)] \textbf{Depolarizing noise}: \textit{For any uniform depolarizing rate $p < 1$ per cycle, the gap remains $\Delta P = (1-p)^2 > 0$.}
    \item[(b)] \textbf{Coherent noise}: \textit{For any systematic unitary perturbation inducing an angular deviation bounded by $\epsilon < \pi/8$ per cycle, the gap remains $\Delta P \ge \cos(4\epsilon) > 0$.}
\end{enumerate}
\end{theorem}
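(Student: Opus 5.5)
We analyze the two noise models on the fixed two-step run of the QFA from Theorem~\ref{thm:QFA_bound}. We start from $\ket{\psi_0}$ and apply $U_{v_A}$, then $U_{v_B}$ (with $v_B=v_A$ for valid strings), and finally measure $\hat{P}_{\rm accept}=\ket{\psi_0}\bra{\psi_0}$. In the noiseless case $P_{\rm valid}=1$ and $P_{\rm invalid}=0$. The strategy is to write the noisy acceptance probabilities explicitly for both string types. Because the input length is exactly $L=2$, at most two noise applications occur, so every bound involves only a constant number of factors and never the graph size, $\chi(G)$, or $d$ beyond trivial normalization.

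For (a), I model a depolarizing cycle as $\rho\mapsto(1-p)U\rho U^\dagger+p\,I/d$, applied after each unitary. Since unital channels commute with unitaries and $\mathcal{D}_p\circ\mathcal{D}_p=\mathcal{D}_{1-(1-p)^2}$, the final state is
\[
\rho_{\rm final}=(1-p)^2\,\ket{\phi}\bra{\phi}+\bigl(1-(1-p)^2\bigr)I/d ,
\]
where $\ket{\phi}$ is the ideal pre-measurement state. Hence $P_{\rm valid}=(1-p)^2+(1-(1-p)^2)/d$ and $P_{\rm invalid}=(1-(1-p)^2)/d$. Subtracting, the $1/d$ offsets cancel exactly, which gives $\Delta P=(1-p)^2>0$ for all $p<1$, independent of $d$.

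For (b), I model coherent noise as replacing each $U_v$ by $\tilde U_v=E_vU_v$, where $E_v$ is unitary with $\|E_v-I\|$ controlled by the angle $\epsilon$. Concretely, I take the deviation to mean $|\langle\chi|E_v|\chi\rangle|\ge\cos\epsilon$ for every unit vector $\ket{\chi}$, so that $E_v$ moves any state through Fubini--Study angle at most $\epsilon$. The ideal output $\ket{\phi}$ and the noisy output $\ket{\tilde\phi}$ then differ in Fubini--Study angle by at most $2\epsilon$, by the triangle inequality for that metric together with unitary invariance. For valid strings, the angle between $\ket{\tilde\phi}$ and $\ket{\psi_0}$ is at most $2\epsilon$, so $P_{\rm valid}\ge\cos^2(2\epsilon)$. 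For invalid strings, the ideal state is orthogonal to $\ket{\psi_0}$, so the angle is at least $\pi/2-2\epsilon$ and $P_{\rm invalid}\le\sin^2(2\epsilon)$. Therefore
\[
\Delta P\ge\cos^2 2\epsilon-\sin^2 2\epsilon=\cos 4\epsilon ,
\]
which is positive exactly when $\epsilon<\pi/8$.

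The main obstacle is step (b): fixing a precise meaning of ``angular deviation per cycle'' under which the angles compose additively. I would use the Fubini--Study metric, since it satisfies the triangle inequality and is unitarily invariant, and these are the two properties that make the two-step accumulation bound $2\epsilon$ valid. I would also remark that the Householder structure is not used beyond unitarity. The $N$-independence of both thresholds then follows immediately, because neither bound references the graph.
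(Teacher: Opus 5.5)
Your proposal is correct and follows the paper's proof essentially step for step: in (a) you derive the same closed-form depolarized state, whose $1/d$ offsets cancel in the gap. In (b) you use the same triangle-inequality bound of $2\epsilon$ in the angular metric $\arccos|\langle\cdot|\cdot\rangle|$, giving $P_{\rm valid}\ge\cos^2 2\epsilon$ and $P_{\rm invalid}\le\sin^2 2\epsilon$; your explicit model $\tilde U_v=E_vU_v$ and your appeal to unitary invariance make precise what the paper leaves implicit.

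One minor correction: it is not true that every unital channel commutes with unitaries (dephasing, for instance, does not). The depolarizing channel, however, is unitarily covariant, and that is all your step in (a) uses.
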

\begin{proof} \textbf{Part (a): Depolarizing Noise.} Let the QFA undergo a uniform depolarizing channel $\mathcal{E}(\rho) = (1-p)\rho + p I/d$ per cycle. Applying the channel as the QFA reads the length-$2$ input string yields the final state:
$$\rho_2 = (1-p)^2 U_{v_B}U_{v_A}\ket{\psi_0}\bra{\psi_0}U_{v_A}^\dagger U_{v_B}^\dagger + \left[1-(1-p)^2\right]\frac{I}{d}.$$
Evaluating the trace against the rank-1 acceptance projector $\hat{P}_{\rm accept} = \ket{\psi_0}\bra{\psi_0}$ yields the  valid and invalid acceptance probabilities:
$$\begin{aligned} P_{\rm accept}(v_A v_A) &= (1-p)^2 + \frac{1-(1-p)^2}{d}, \\ P_{\rm accept}(v_A v_B) &= \frac{1-(1-p)^2}{d}. \end{aligned}$$
The dimensional noise background cancels upon subtraction, yielding an invariant acceptance gap of $\Delta P = (1-p)^2 > 0$ for all $p < 1$, which is independent of $N$ and $d$.

\textbf{Part (b): Coherent Noise.} Let the systematic perturbation introduce a unitary drift per cycle with the angular deviation bounded by $\epsilon$. Utilizing the standard angular distance metric $D(\ket{\phi},\ket{\chi}) = \arccos(|\braket{\phi|\chi}|)$, the triangle inequality restricts the maximum state vector drift after two operational cycles to exactly $2\epsilon$. The valid acceptance probability is lower-bounded by $P_{\rm accept}(v_A v_A) \ge \cos^2(2\epsilon)$, while the invalid acceptance probability is upper-bounded by $P_{\rm accept}(v_A v_B) \le \sin^2(2\epsilon)$. A positive acceptance gap requires:
$$\Delta P \ge \cos^2(2\epsilon) - \sin^2(2\epsilon) = \cos(4\epsilon) > 0,$$
establishing a size-independent coherent noise threshold of $\epsilon < \pi/8$. \end{proof}

\section{The 60-Vertex Waegell-Aravind Graph}
\label{app:SM}

To demonstrate near-term experimental viability, we utilize the 60-vertex Kochen-Specker graph $G_{\text{WA}}$ constructed by Waegell and Aravind~\cite{Waegell_2010}, which satisfies $\chi(G_{\text{WA}}) - \xi(G_{\text{WA}}) = 2$ at a minimal orthogonal rank of $\xi(G_{\text{WA}}) = 4$. The vertices are constructed from rays corresponding to the vertices of the 4D regular 600-cell polytope. Let $\phi = \frac{1+\sqrt{5}}{2}$ denote the golden ratio. The 120 vertices of the 600-cell in $\mathbb{R}^4$ are partitioned into three coordinate sets:
\begin{enumerate}
    \item \textbf{Set 1 ($g_1$):} All permutations of $(\pm 2, 0, 0, 0)$, yielding 8 vertices.
    \item \textbf{Set 2 ($g_2$):} All sign combinations of $(\pm 1, \pm 1, \pm 1, \pm 1)$, yielding 16 vertices.
    \item \textbf{Set 3 ($g_3$):} All even permutations of $(0, \pm 1, \pm \phi, \pm \phi^{-1})$, yielding 96 vertices.
\end{enumerate}

Identifying antipodal vectors $v \sim -v$ reduces the set to exactly $|V| = 60$ unique projective rays, each defining a state vector $\ket{v}$ of a single 4-level qudit system. Undirected edges link vertices if and only if their state vectors are orthogonal, $\langle v_i | v_j \rangle = 0$. By construction, therefore, $\xi(G_{\text{WA}}) = 4$.

\begin{figure}[!htbp]
    \centering
    \includegraphics[width=0.8\linewidth]{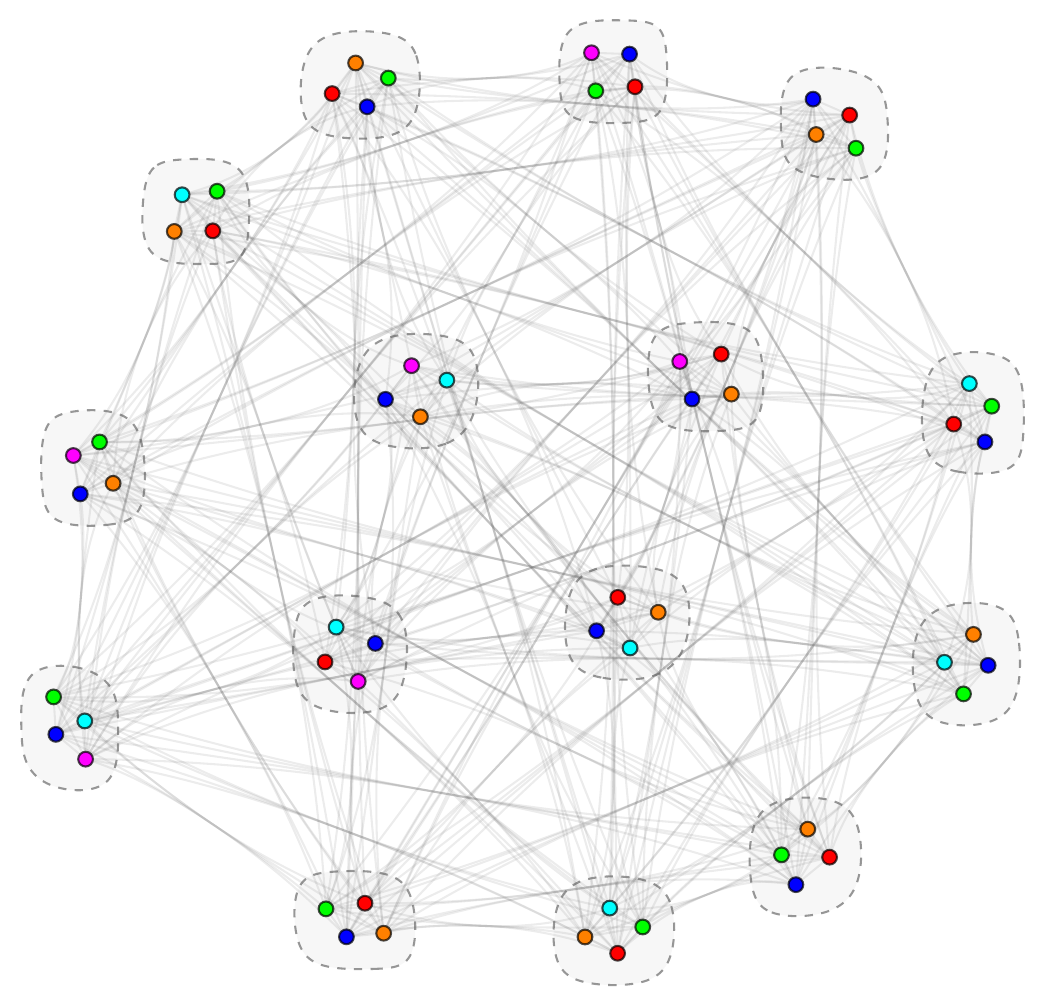}
    \caption{Valid 6-coloring of the 60-ray Waegell-Aravind orthogonality graph~\cite{Waegell_2010}. The graph is partitioned into 15 disjoint 4-element orthogonal bases (shaded regions). Because each basis forms a clique, every region contains four distinct vertex colors.}
    \label{fig:60-v}
\end{figure}

We computed the chromatic number of $G_{\text{WA}}$ by evaluating its independence number to be $\alpha(G_{\text{WA}}) = 13$. The fractional chromatic inequality implies $\chi(G_{\text{WA}}) \geq \lceil |V|/\alpha(G_{\text{WA}}) \rceil = \lceil 60/13 \rceil = 5$. Mapping the graph coloring to a Boolean satisfiability (SAT) problem, an exhaustive search revealed that $G_{\text{WA}}$ is uncolorable using 5 colors. An explicit coloring was found using 6 colors (Fig.~\ref{fig:60-v}), proving that $\chi(G_{\text{WA}}) = 6$. Solving the KSP on this graph requires a classical machine operating at zero soundness error to utilize 6 distinct states, whereas a 4-level quantum system solves the automaton natively using 4 basis states.

\clearpage